\DeclareMathOperator{\inter}{int}
\DeclareMathOperator*{\argmax}{arg\,max}
\newtheorem{theorem}{Theorem}[section]
\newaliascnt{proposition}{theorem}
\newtheorem{proposition}[proposition]{Proposition}
\newaliascnt{lemma}{theorem}
\newaliascnt{corollary}{theorem}
\newtheorem{corollary}[corollary]{Corollary}
\newaliascnt{claim}{theorem}
\theoremstyle{definition}
\newaliascnt{definition}{theorem}
\newtheorem{definition}[definition]{Definition}
\newaliascnt{example}{theorem}
\newtheorem{example}[example]{Example}
\newaliascnt{assumption}{theorem}
\newaliascnt{condition}{theorem}
\newaliascnt{question}{theorem}
\newaliascnt{remark}{theorem}
\newaliascnt{remarks}{theorem}
\newaliascnt{aside}{theorem}
\newaliascnt{note}{theorem}
\crefname{theorem}{theorem}{theorems}
\Crefname{theorem}{Theorem}{Theorems}
\crefname{proposition}{proposition}{propositions}
\Crefname{proposition}{Proposition}{Propositions}
\crefname{lemma}{lemma}{lemmas}
\Crefname{lemma}{Lemma}{Lemmas}
\crefname{corollary}{corollary}{corollaries}
\Crefname{corollary}{Corollary}{Corollaries}
\crefname{claim}{claim}{claims}
\Crefname{claim}{Claim}{Claims}
\crefname{definition}{definition}{definitions}
\Crefname{definition}{Definition}{Definitions}
\crefname{example}{example}{examples}
\Crefname{example}{Example}{Examples}
\crefname{assumption}{assumption}{assumptions}
\Crefname{assumption}{Assumption}{Assumptions}
\let\cref@old@isrefconsecutive\cref@isrefconsecutive
\def\cref@isrefconsecutive#1#2{%
  \begingroup
    \def\cref@assumptiontype{assumption}%
    \cref@gettype{#1}{\cref@typea}%
    \ifx\cref@typea\cref@assumptiontype
      \endgroup
      \@cref@refconsecutivefalse
    \else
      \endgroup
      \cref@old@isrefconsecutive{#1}{#2}%
    \fi
}
\crefname{condition}{condition}{conditions}
\Crefname{condition}{Condition}{Conditions}
\crefname{question}{question}{questions}
\Crefname{question}{Question}{Questions}
\crefname{remark}{remark}{remarks}
\Crefname{remark}{Remark}{Remarks}
\crefname{remarks}{remarks}{remarks}
\Crefname{remarks}{Remarks}{Remarks}
\crefname{aside}{aside}{asides}
\Crefname{aside}{Aside}{Asides}
\crefname{note}{note}{notes}
\Crefname{note}{Note}{Notes}
\crefname{appendix}{appendix}{appendices}
\Crefname{appendix}{Appendix}{Appendices}
\newcommand{\secref}[1]{\hyperref[#1]{\S\ref*{#1}}}
\definecolor{backcolour}{rgb}{0.63, 0.79, 0.95}
\lstdefinestyle{mystyle}{
  backgroundcolor=\color{backcolour},
  basicstyle=\ttfamily\footnotesize,
  breakatwhitespace=false,
  breaklines=true,
  captionpos=b,
  keepspaces=true,
  numbers=left,
  numbersep=5pt,
  showspaces=false,
  showstringspaces=false,
  showtabs=false,
  tabsize=2
}
\begin{document}
\author{Mark Whitmeyer \and Cole Williams\thanks{MW: Arizona State University, \href{mailto:mark.whitmeyer@gmail.com}{mark.whitmeyer@gmail.com}. CW:  University of Nebraska-Lincoln \href{mailto:cole.randall.williams@gmail.com}{cole.randall.williams@gmail.com}. We thank the editor (Ben Brooks), three anonymous reviewers, and Emir Kamenica for their feedback.}}
\title{Strong Dominance for Dynamic Signals}
\maketitle

\begin{abstract}
In this paper, we reveal that the signal representation of information introduced by Gentzkow and Kamenica (2017) can be applied profitably to dynamic decision problems. We use this to characterize when one dynamic information structure is more valuable to an agent than another, irrespective of what other dynamic sources of information the agent may possess. Notably, this robust dominance is equivalent to an intuitive dynamic version of Brooks, Frankel, and Kamenica (2022)'s \textit{reveal-or-refine} condition.
\end{abstract}

\section{Introduction}

An agent faces a dynamic decision problem. There is some (possibly infinite) time horizon, and every period the agent takes an action. The possible merits and deficiencies of actions are unknown and depend on a random state of the world. This could be an investor choosing how to dynamically adjust her portfolio, an entrepreneur choosing when to launch (or quit) a startup, or a cervid choosing each day whether, when, and where to forage. 

In all such problems, information is valuable. As it could always be disregarded, a decision-maker can never be hurt by learning more about the world. Naturally, some kinds of information are more valuable than other varieties, and \cite{greenshtein1996comparison} characterizes precisely how to compare two dynamic sources of information. But in this classical notion, each information source being compared is assumed to be the agent's only source of information. This inspires the question, then, of how to compare dynamic structures in the presence of additional sources of information.

To fix ideas, let us modify the newspaper subscription metaphor introduced by \cite*{brooks2022comparisons} (henceforth \textcolor{NavyBlue}{BFK}). An agent is considering year-long subscriptions to either the \textit{Financial Times} or the \textit{Wall Street Journal}. Every day over the course of her subscription the agent will read the newspaper before taking various actions--she faces a fundamentally dynamic problem and it is important to evaluate the newspapers as dynamic sources of information. Moreover, the agent also has additional sources of information--intermittent phone calls and texts from relatives, messages from friends, and a cable TV subscription. In the presence of these additional references, can we nevertheless rank the two specified newspapers by virtue of their value to the agent?

In this paper we formulate a robust notion of sequential dominance, extending the static concept formulated by \textcolor{NavyBlue}{BFK} to a dynamic setting. One dynamic information structure strongly dominates another if for any decision problem and any auxiliary sources of information an agent always prefers the one to the other. \textcolor{NavyBlue}{BFK} show that the static version of this strong dominance is equal to an intuitive ``reveal-or-refine'' comparison between the two information sources under comparison. In short, in the static setting the strongly dominant information structure is such that the agent always either knows ``what she would have seen'' in the dominated information structure or knows what the state is.

Our main result is that in dynamic decision problems, strong sequential dominance is equivalent to period-by-period static strong dominance. The strongly dominant dynamic information structure must be such that \textbf{every period} it ``reveal-or-refines'' the dominated one. Thanks to \textcolor{NavyBlue}{BFK}, necessity is the easiest direction of the proof: if at any period \(t\) the reveal-or-refine condition does not hold, we can just specify a decision problem for which only the agent's decision at period \(t\) matters and so the static necessity finding of \textcolor{NavyBlue}{BFK} applies. Sufficiency is also easy: dominance is essentially replicability, and reveal-or-refine at each period means the more informed agent either always knows what the less informed one does (or better) and hence can mimic any strategy of the latter.

Our finding is valuable for several reasons. First, it has intrinsic worth: asking for a ranking of dynamic information structures that is robust to other sources is obviously relevant--indeed, that paradigm is surely the modal one in the real world. Second, the equivalent (dynamic) ``reveal-or-refine'' condition is simple and intuitive. Third, the ranking of dynamic information structures (absent robustness) is strikingly ungainly. In contrast to the static scenario, where the ``garbling'' equivalence of dominance yields an easy-to-check condition on stochastic matrices, the ``dynamic garbling'' equivalence of sequential dominance is relatively impenetrable and cannot be distilled into a collection of static conditions. In contrast, as we reveal, the conditions for strong sequential dominance are just an aggregation of static conditions and the visual apparatus highlighted by \textcolor{NavyBlue}{BFK} makes dominance easy to comprehend.

Though we feel that our work has value, we do not wish to overstate our contribution. The crucial novelty is our observation that the ``signal'' apparatus of \cite{gentzkow2017bayesian} and \textcolor{NavyBlue}{BFK} is extremely useful in dynamic decision problems. Given this, our main result falls out essentially without effort and is almost an instantiation of Theorem 1 in \textcolor{NavyBlue}{BFK}.

\subsection{Other Related Work}

Though the seminal work ranking static information structures by Blackwell \citep{blackwell,blackwell2} has been around for a long time, the analogous ranking for dynamic information structures constructed by \cite{greenshtein1996comparison} is surprisingly recent. \cite{de2018blackwell} connects the results in the static and dynamic settings in an elegant way and proffers startlingly graceful proofs.

\cite{de2023rationalizing} leverage the apparatus of \cite{greenshtein1996comparison} and \cite{de2018blackwell} to characterize what sequences of actions in a dynamic decision problem can be rationalized by some dynamic information structure. \cite{deb2021dynamic} conduct a similar exercise but with the agent's preferences as a free parameter as well. \cite{renou2024comparing} formulate a dominance notion of dynamic information structures in which the dominant one must be preferred to the other, at \textit{every} history, not just from the \textit{ex ante} perspective.

In a similar spirit to \textcolor{NavyBlue}{BFK}, \cite*{borgers2013signals} study which signals are complements and substitutes. \cite*{de2023robust} ask how to robustly aggregate potentially correlated sources of information. \cite*{liang2022dynamically} look at dynamic information acquisition of information about a (multivariate-)normally distributed state by observing diffusion processes.

Finally, the signal apparatus introduced by \cite*{gentzkow2017bayesian}--which pertains to static environments--is central to our results, as it is the correct representation of information for facilitating robust comparisons and is easily extended (as we reveal) to dynamic settings. The signal construction is related to the object introduced by \cite{green2022two} (originally in their 1978 working paper) and is later used in \cite*{frankel2019quantifying} and \cite*{brooks2022information} (and, of course, in \textcolor{NavyBlue}{BFK}).

\section{Analysis}

\paragraph{Model.}

Our model extends the signal concept introduced by \cite{gentzkow2017bayesian} to a dynamic setting, in which information arrives over a discrete sequence of time periods,  \(t \in \left\{1,\dots,T\right\} \eqqcolon \mathcal{T}\), with \(1 \leq T \leq \infty\). 

There is a finite set of possible states of the world \(\Theta\). The realized state is drawn from some (full-support) prior \(\mu \in \inter \Delta\left(\Theta\right)\). Given a collection of finite sets of signal realizations \(\mathcal{S} \coloneqq \left\{S_1, \dots, S_T\right\}\), a \textit{dynamic experiment} is a stochastic map \(\pi \colon \Theta \to \Delta (S)\), where \(S \coloneqq \times_{t=1}^T S_t\). Each period \(t\), the agent observes a realization \(s_t \in S_t\) according to \(\pi\).

Following \cite{gentzkow2017bayesian}, we define a \textit{signal} to be a finite partition of \(\Theta \times \left[0,1\right]\) for which each element is a Lebesgue measurable subset of \(\Theta \times \left[0,1\right]\). Signal \(\alpha\) \textit{refines} signal \(\beta\) if every element of \(\alpha\) is a subset of an element of \(\beta\). Given two signals, \(\alpha\) and \(\beta\), we define \(\alpha \vee \beta\) to be the coarsest refinement of both \(\alpha\) and \(\beta\). \(\vee\) is the \textit{join} and \(\alpha \vee \beta\) is the signal corresponding to the observation of both \(\alpha\) and \(\beta\).

\begin{definition}
    A \emph{dynamic signal} is a sequence of signals, \(\eta = \left(\eta_{t}\right)_{t=1}^{T}\), such that \(\eta_{t+1}\) refines \(\eta_{t}\) for all \(t < T\).
\end{definition}

For dynamic signals \(\alpha = \left(\alpha_{t}\right)_{t=1}^{T}\) and \(\beta = \left(\beta_{t}\right)_{t=1}^{T}\), we let \(\alpha \vee \beta \coloneqq \left(\alpha_t \vee \beta_{t}\right)_{t=1}^{T}\). An \textit{extended dynamic decision problem} \(\mathcal{D} \coloneqq \left(u, A, \rho\right)\) consists of a collection of compact period-\(t\) action sets \(A \coloneqq \times_{t=1}^T A_t\), 
a continuous utility function \(u \colon A \times \Theta \to \mathbb{R}\), and a dynamic signal \(\rho = \left(\rho_t\right)_{t=1}^T\). Given two dynamic signals \(\alpha\) and \(\beta\), let \(\pi_{\alpha \vee \beta}\colon \Theta \to \Delta\left(S_{\alpha \vee \beta}\right)\) be the dynamic experiment corresponding to the dynamic signal \(\alpha \vee \beta\), where \(S_{\alpha \vee \beta} \coloneqq \times_{t=1}^T \left(S_{\alpha, t} \times S_{\beta, t}\right)\).

\begin{example}\label{ex1}
Suppose there are two periods (\(T=2\)) and two states (\(\Theta = \left\{\theta_L,\theta_H\right\}\)). Figure \ref{fig1} illustrates a dynamic signal \(\eta\), where
\[\eta_1 = \left\{\textcolor{MidnightBlue}{\underbrace{\left(\theta_L, \left[0,\frac{1}{4}\right)\right)}_{h}}, \textcolor{OrangeRed}{\underbrace{\left(\theta_L, \left[\frac{1}{4}, 1\right]\right)}_{l}}, \textcolor{MidnightBlue}{\underbrace{\left(\theta_H, \left[0,\frac{3}{4}\right)\right)}_{h}}, \textcolor{OrangeRed}{\underbrace{\left(\theta_H, \left[\frac{3}{4}, 1\right]\right)}_{l}}\right\}\text{,}\]
and
\[\eta_2 = \left\{\textcolor{MidnightBlue}{\underbrace{\left(\theta_L, \left[0,\frac{1}{4}\right)\right)}_{h H}}, \textcolor{Apricot}{\underbrace{\left(\theta_L, \left[\frac{1}{4}, \frac{3}{4}\right)\right)}_{l H}}, \textcolor{Emerald}{\underbrace{\left(\theta_L, \left[\frac{3}{4}, 1\right]\right)}_{l L}}, \textcolor{MidnightBlue}{\underbrace{\left(\theta_H, \left[0,\frac{3}{4}\right)\right)}_{h H}}, \textcolor{Emerald}{\underbrace{\left(\theta_H, \left[\frac{3}{4}, 1\right]\right)}_{l L}}\right\}\text{.}\]
Then, the corresponding experiment is, letting \(S_1 \coloneqq \left(l,h\right) \quad \text{and} \quad S_2 \coloneqq \left(L,H\right)\),
\[\begin{split}
&\textcolor{MidnightBlue}{\pi\left(h,\left.H\right|\theta_L\right) = \frac{1}{4}, \quad \pi\left(h,\left.H\right|\theta_H\right) = \frac{3}{4}}, \quad \textcolor{Apricot}{\pi\left(l,\left.H\right|\theta_L\right) = \frac{1}{2}, \quad \pi\left(l,\left.H\right|\theta_H\right) = 0}\\
&\textcolor{Emerald}{\pi\left(l,\left.L\right|\theta_L\right) = \frac{1}{4}, \quad \pi\left(l,\left.L\right|\theta_H\right) = \frac{1}{4}}, \quad \text{and} \quad \pi\left(h,\left.L\right|\theta_L\right) = \pi\left(h,\left.L\right|\theta_H\right) = 0\text{.}
\end{split}\]

\begin{figure}
    \centering
    \includegraphics[width=\textwidth]{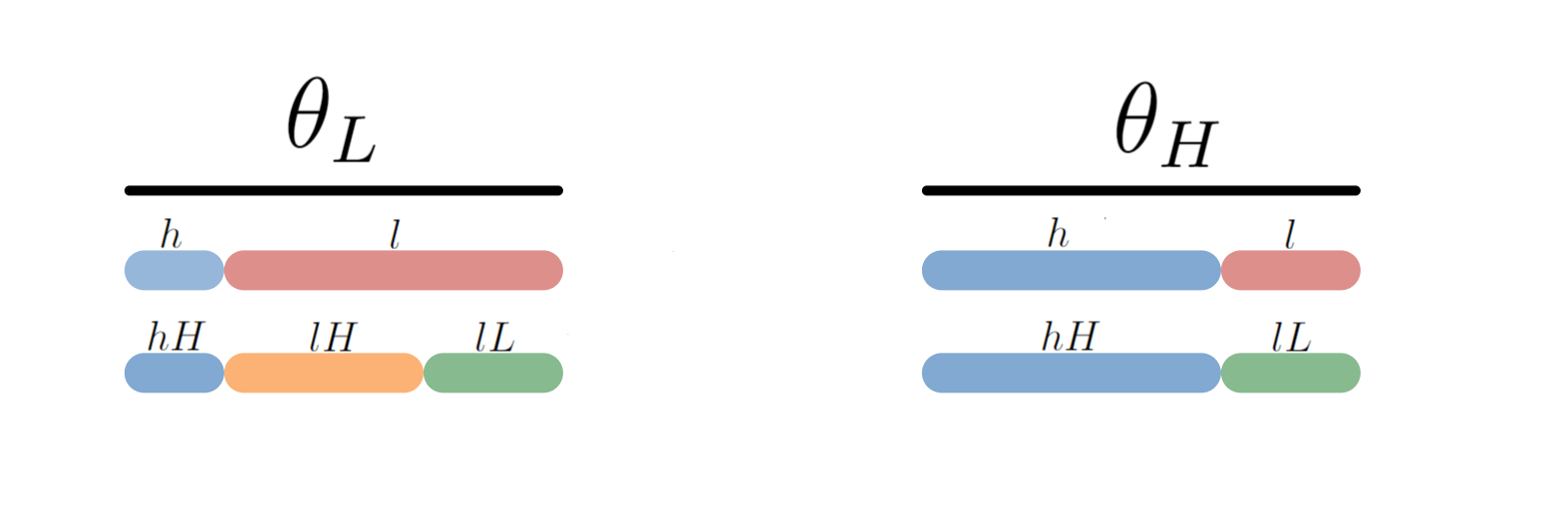}
    \caption{Dynamic signal \(\eta = \left(\eta_1,\eta_2\right)\) from Example \ref{ex1}.}
    \label{fig1}
\end{figure}
    
\end{example}

Let \(a(s)\) denote a deterministic (which is without loss of generality) adapted map \(a \colon S \to A\), where \(a\) is \textit{adapted} if, for each \(t\), the projection of \(a(s)\) onto \(A^t \coloneqq A_1 \times \dots \times A_t\) depends only on \((s_1,\dots,s_t) \in S^{t} \coloneqq S_1 \times \dots \times S_t\) \citep*{de2018blackwell}. That is, the agent is not clairvoyant. Let \(\mathcal{A}(\pi)\) denote the set of adapted maps. Thus, the value of dynamic signal \(\eta\) in an extended dynamic decision problem \(\mathcal{D}\) is:\footnote{Endow all product spaces with their product topologies and, when \(T=\infty\), read sums over \(S\) as integrals under the probability measure induced by the prior and the relevant experiment. Adapted maps form a product of compact action sets indexed by dates and finite histories, so the set of adapted maps is compact by Tychonoff’s theorem. As \(u\) is continuous, for each fixed experiment the expected payoff is continuous on this compact set and so maximum in \(W(\eta)\) is attained.}
\[W(\eta) = \max_{a(s) \in \mathcal{A}(\pi_{\eta \vee \rho})} \sum_{\substack{\theta \in \Theta\\ s \in S}} \pi_{\eta \vee \rho}\left(\left.s\right|\theta\right)\mu(\theta)u\left(a(s),\theta\right)\text{.}\]

\begin{definition}
    Dynamic signal \(\eta\) \emph{strongly dominates} dynamic signal \(\hat{\eta}\) if  \(W\left(\eta\right) \geq W\left(\hat{\eta}\right)\) for any extended dynamic decision problem \(\mathcal{D}\). 
\end{definition}
For a dynamic signal \(\eta\) and for all \(t \in \mathcal{T}\), \(s^t\) denotes a generic element of partition \(\eta_t\).
\begin{definition}
    Signal \(\alpha\) \emph{reveal-or-refines} signal \(\hat{\alpha}\) if for every \(s \in \alpha\), either
    \begin{enumerate}
        \item \(s\) reveals the state: \(\mathbb{P}\left(\left.s\right|\theta\right)>0\) for at most one \(\theta \in \Theta\); or
        \item \(s \subseteq \hat{s}\) for some \(\hat{s} \in \hat{\alpha}\).
    \end{enumerate}
    Dynamic signal \(\eta\) \textit{dynamically reveal-or-refines} dynamic signal \(\hat{\eta}\) if \(\eta_t\) reveal-or-refines \(\hat{\eta}_t\) for every \(t \in \mathcal{T}\).
\end{definition}

\begin{theorem}\label{maintheorem}
    Dynamic signal \(\eta\) strongly dominates dynamic signal \(\hat{\eta}\) if and only if  \(\eta\) dynamically reveal-or-refines \(\hat{\eta}\).
\end{theorem}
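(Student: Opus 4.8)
The plan is to prove the two directions separately: necessity reduces to the static characterization of \cite{brooks2022comparisons}, and sufficiency is a period-by-period replication (``garbling'') argument combined with the dynamic Blackwell theorem of \cite{greenshtein1996comparison} and \cite{de2018blackwell}.

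\emph{Necessity.} Suppose $\eta$ fails to dynamically reveal-or-refine $\hat\eta$; pick a period $t^\ast$ at which $\eta_{t^\ast}$ does not reveal-or-refine $\hat\eta_{t^\ast}$. By Theorem 1 of \cite{brooks2022comparisons}, $\eta_{t^\ast}$ does not strongly dominate $\hat\eta_{t^\ast}$ statically, so there are a static decision problem $(v,B)$ and an auxiliary static signal $\gamma$ with the agent strictly preferring $\hat\eta_{t^\ast}\vee\gamma$ to $\eta_{t^\ast}\vee\gamma$. I would then build an extended dynamic problem $\mathcal{D}$ in which only period $t^\ast$ matters: all $A_t$ with $t\neq t^\ast$ are singletons, $A_{t^\ast}=B$, $u(a,\theta)=v(a_{t^\ast},\theta)$, and the auxiliary dynamic signal $\rho$ is trivial for $t<t^\ast$ and equals $\gamma$ thereafter (a valid refining sequence). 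Since $\eta$ and $\rho$ are refining, adaptedness makes the information relevant to choosing $a_{t^\ast}$ exactly the realization of $\eta_{t^\ast}\vee\gamma$, so $W(\eta)$ and $W(\hat\eta)$ equal the static values of $(v,B)$ under $\eta_{t^\ast}\vee\gamma$ and $\hat\eta_{t^\ast}\vee\gamma$; hence $W(\eta)<W(\hat\eta)$ and $\eta$ does not strongly dominate $\hat\eta$.

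\emph{Sufficiency.} Now suppose $\eta$ dynamically reveal-or-refines $\hat\eta$ and fix any $\mathcal{D}=(u,A,\rho)$. I would construct an \emph{adapted}, possibly stochastic, garbling that maps realizations of $\pi_{\eta\vee\rho}$ to realizations of $\pi_{\hat\eta\vee\rho}$ while preserving the joint law with $\theta$; dynamic Blackwell dominance and hence $W(\eta)\ge W(\hat\eta)$ then follow, and equivalently one may simply compose this garbling with an optimal adapted policy for $\hat\eta$ to exhibit a replicating policy for $\eta$ (then invoke that determinism is without loss, as the paper notes). The garbling passes the $\rho$-coordinates through unchanged and, in each period $t$, given the realized cell $s^t\in\eta_t$: reports the unique $\hat s^t\in\hat\eta_t$ containing $s^t$ in the ``refine'' case; and, in the ``reveal'' case, where $s^t$ pins down the state $\theta$, draws $\hat s^t$ from the conditional law of the $\hat\eta_t$-cell given $\theta$, the already-simulated $\hat s^{t-1}$, and the observed $\rho$-realizations up to $t$. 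Because $\eta$ is refining, once revelation occurs at some period it persists, so from that period on the agent is merely sampling the remaining $\hat\eta$-path conditional on the now-known $\theta$ (still seeing the true $\rho$-realizations), and before it she is reading off the true $\hat\eta_t$-cells; chaining these conditionals gives the simulated $(\hat s,g)$-process exactly the law of $\pi_{\hat\eta\vee\rho}$.

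The main obstacle I anticipate is the bookkeeping at the refine-to-reveal transition in the sufficiency construction: one must check that the $\eta$-agent actually possesses the conditioning information ($\theta$, the simulated $\hat s^{t-1}$, the true $g^t$) needed to draw $\hat s^t$, that sampling from this particular conditional rather than a finer one is precisely what restores the correct joint distribution, and that the revealed states are mutually consistent across periods (which follows from $\eta_t$ refining $\eta_{t-1}$). The necessity direction has only the minor subtlety of verifying that the constructed $\rho$ is a legitimate dynamic signal and that the information available at $t^\ast$ collapses exactly to $\eta_{t^\ast}\vee\gamma$.
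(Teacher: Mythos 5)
Your necessity argument is correct and is essentially the paper's: reduce to Theorem 1 of \cite{brooks2022comparisons} by an extended problem whose payoff depends only on the period-\(t^\ast\) action, with \(\rho\) built from the static auxiliary signal. The problem is in sufficiency. In the ``reveal'' case your garbling draws the simulated cell \(\hat s^t\) from the conditional law given \(\theta\), \(\hat s^{t-1}\), and the \(\rho\)-realizations \emph{up to} \(t\), and you assert that chaining these conditionals reproduces the law of \(\pi_{\hat\eta\vee\rho}\). It does not: the true \(\hat s^t\) and the \emph{future} \(\rho\)-realizations are functions of the same underlying point \((\theta,x)\in\Theta\times[0,1]\) and hence correlated, while your adapted simulation, which randomizes independently of \(x\) at time \(t\), cannot reproduce that correlation. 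Concretely, take \(T=2\), \(\eta_1=\eta_2=\{\{\theta\}\times[0,1]\colon\theta\in\Theta\}\) (so every cell reveals the state and dynamic reveal-or-refine holds), \(\hat\eta_1=\hat\eta_2=\{\Theta\times[0,1/2),\,\Theta\times[1/2,1]\}\), \(\rho_1\) trivial and \(\rho_2=\hat\eta_1\). The true pair \((\hat s^1,z^2)\) is perfectly correlated, but your simulated \(\hat s^1\), drawn at \(t=1\) given only \(\theta\) and the trivial \(z^1\), is independent of \(z^2\); so the simulated process does not have the law of \(\pi_{\hat\eta\vee\rho}\), the claimed garbling does not exist, and composing your simulation with an \(\hat\eta\)-optimal adapted policy need not attain \(W(\hat\eta)\). (In chain-rule terms, exact replication would require the factor \(\mathbb{P}(z^t\mid\theta,\hat s^{t-1},z^{t-1})\), but \(z^t\) is the true realization, whose law conditions on \(x\) rather than on your simulated history.)

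The theorem survives because replication is the wrong tool once revelation has occurred, and this is exactly where the paper argues differently. Since \(\eta_{t+1}\) refines \(\eta_t\), revelation persists: let \(t^\ast\) be the first period at which \(s^{t}\) reveals the state. For \(t<t^\ast\) your deterministic ``refine'' map is exact (\(s^t\) determines \(\hat s^t\) and the true \(z^t\) is observed directly), so the agent can mimic. From \(t^\ast\) on she knows \(\theta\) and simply plays the deterministic continuation maximizing \(u\bigl((a_1,\dots,a_{t^\ast-1},\cdot),\theta\bigr)\); because expected utility conditional on \(\theta\) is linear in mixtures over continuations, this weakly beats whatever distribution over continuations the \(\hat\eta\)-agent's adapted policy induces, with no need to match its law. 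Repairing your proof means replacing the post-revelation simulation with this full-information argument.
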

\begin{proof}
    The proof is a simple adaptation of the proof of Theorem 1 in \textcolor{NavyBlue}{BFK}. 
    
    \medskip

    \noindent \(\left(\Leftarrow\right)\) Fix an extended dynamic decision problem. It suffices to show that \(\eta\) is more valuable than \(\hat{\eta}\) conditional on any sequence \(\left(s^t\right)_{t=1}^{T}\) for which \(s^T \subseteq \dots \subseteq s^1\) and \(s^t \in \eta_t\) for every \(t \in \mathcal{T}\).
    There are two possibilities: i. either there exists a \(t^* \in \mathcal{T}\) such that  \(s^t\) reveals the state for all \(t \geq t^*\); or ii. \(s^t\) does not reveal the state  for all \(t \in \mathcal{T}\). In the first case, without loss of generality we define \[t^* \coloneqq \min\left\{t \in \mathcal{T} \colon s^t \ \text{reveals the state} \ \right\}\text{.}\]
    We define \(t^{*} = \infty\) in the second case.

    In both cases, for all \(t < t^{*}\), each \(s^t\) refines each \(\hat{s}^{t}\), i.e., \(s^t \subseteq \hat{s}^{t}\) for all \(t \in \left\{1, \dots, t^{*}-1\right\}\). Accordingly for any sequence \(\left(z^t\right)_{t=1}^T \in \left(\rho_t\right)_{t=1}^{T}\), \(s^t \cap z^t \subseteq \hat{s}^{t} \cap z^t\) for all \(t \in \left\{1, \dots, t^{*}-1\right\}\). The result for the second case, therefore, immediately follows. Moreover, in the first case, conditional on the sequence of realizations \(\left(s^t\right)_{t=1}^{t^*-1}\), nothing can be more valuable than \(\eta\).

    \medskip

    \noindent \(\left(\Rightarrow\right)\) Suppose for the sake of contraposition that \(\eta\) does not dynamically reveal-or-refine \(\hat{\eta}\). This means there exists some \(t \in \mathcal{T}\) at which \(\eta_t\) does not reveal-or-refine \(\hat{\eta}_t\). As there exists an extended dynamic decision problem in which \(u\) depends only on the action taken in period \(t\), Theorem 1 in \textcolor{NavyBlue}{BFK} implies the conclusion. \end{proof}

Reveal-or-refine means that it is as if the more-(strongly) informed agent has seen what the less-informed one did, plus possibly more, or the more-informed agent actually knows the state. Dynamic reveal-or-refine means that at each period \(t\), the more-informed agent has seen what the less-informed one has, plus possibly more, or the more-informed agent has learned the state at some point in the past. Consequently, it is clear that the more informed agent must be better off: information is a good with free disposal so the agent could always ignore the extra information. Why is reveal-or-refine at each moment necessary? This is due to the richness of the set of dynamic decision problems, which includes the set of problems in which it is only the agent's action in a specific period that matters, allowing us to apply Theorem 1 in \textcolor{NavyBlue}{BFK}.

\subsection{Additively-Separable Problems}
\begin{definition}
    An extended dynamic decision problem is \emph{additively separable} or \emph{in the AS class} if the agent's utility function \(u \colon A \times \Theta \to \mathbb{R}\) can be written
    \(u(a,\theta) = \sum_{t=1}^T u_t(a_t,\theta)\), where for each \(t\) \(u_t\) is continuous and the period-\(t\) utilities are such that the sum is finite.
\end{definition}

For the sequential dominance of \cite{greenshtein1996comparison}, in which the agent has no auxiliary sources of information--akin to the environment of \cite{blackwell} and \cite{blackwell2}-- \cite{markcole} show that dominance in the AS class is not equivalent to dominance in the grand class of all sequential decision problems. We find that this non-equivalence \textbf{does not} persist when we modify dominance to strong dominance. Namely, we discover an equivalence of the two concepts.
\begin{definition}
    Dynamic signal \(\eta\) \emph{strongly dominates} dynamic signal \(\hat{\eta}\) in the AS class if for any extended dynamic AS problem \(W\left(\eta\right) \geq W\left(\hat{\eta}\right)\). 
\end{definition}
Then, we have an easy corollary of Theorem \ref{maintheorem}:
\begin{corollary}
    Dynamic signal \(\eta\) strongly dominates dynamic signal \(\hat{\eta}\) in the AS class if and only if \(\eta\) dynamically reveal-or-refines \(\hat{\eta}\).
\end{corollary}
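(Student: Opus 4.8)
The plan is to obtain the corollary directly from Theorem~\ref{maintheorem}, the only new observation being that the decision problem used to witness a failure of necessity in that theorem is itself additively separable. So there is really nothing to prove beyond unpacking definitions.

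For the ``if'' direction $(\Leftarrow)$ there is nothing to do but note an inclusion: every extended dynamic AS problem is an extended dynamic decision problem. Hence, if $\eta$ dynamically reveal-or-refines $\hat{\eta}$, then Theorem~\ref{maintheorem} gives $W(\eta) \geq W(\hat{\eta})$ for \emph{every} extended dynamic decision problem, and \emph{a fortiori} for every AS one. Thus $\eta$ strongly dominates $\hat{\eta}$ in the AS class.

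For the ``only if'' direction $(\Rightarrow)$ I would argue by contraposition, mimicking the necessity half of the proof of Theorem~\ref{maintheorem}. Suppose $\eta$ does not dynamically reveal-or-refine $\hat{\eta}$; then there is a period $t \in \mathcal{T}$ at which $\eta_t$ fails to reveal-or-refine $\hat{\eta}_t$. Theorem~1 in \textcolor{Aquamarine}{BFK} then supplies a static decision problem---with compact action set $A_t$ and continuous utility $u_t \colon A_t \times \Theta \to \mathbb{R}$---together with an auxiliary (static) signal $r$ such that the agent is strictly worse off with $\eta_t \vee r$ than with $\hat{\eta}_t \vee r$. I would embed this into an extended dynamic decision problem $\mathcal{D} = (u, A, \rho)$ by taking the period-$t$ action set to be $A_t$, letting the remaining action sets be arbitrary (say, singletons), setting the auxiliary dynamic signal to be the constant sequence $\rho_s = r$ for all $s$ (a valid dynamic signal, since $r$ refines $r$), and defining
\[
u(a,\theta) = \sum_{s=1}^{T} u_s(a_s,\theta), \qquad u_s \equiv 0 \text{ for } s \neq t.
\]
Each $u_s$ is continuous and the sum is trivially finite, so $\mathcal{D}$ lies in the AS class. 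Since only the period-$t$ action affects payoffs, the value of a dynamic signal in $\mathcal{D}$ coincides with the \textcolor{Aquamarine}{BFK} static value of its period-$t$ component joined with $r$; hence $W(\eta) < W(\hat{\eta})$, and $\eta$ does not strongly dominate $\hat{\eta}$ in the AS class.

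The one thing to verify---and essentially the only content of the argument---is that the witnessing problem from \textcolor{Aquamarine}{BFK}'s necessity construction depends on a single period's action and is therefore additively separable; there is no genuine obstacle. It is worth flagging that this is precisely where robustness matters: for the non-robust sequential dominance of \cite{greenshtein1996comparison}, a witness of non-dominance need not be AS---which is exactly why AS dominance is strictly weaker in that setting---whereas here the witness is AS ``for free.''
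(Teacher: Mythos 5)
Your proposal is correct and follows exactly the paper's route: the ``if'' direction is immediate since AS problems form a subclass of all extended dynamic decision problems, and the ``only if'' direction observes that the witnessing problem in the necessity part of Theorem~\ref{maintheorem} (payoffs depending only on the period-$t$ action) is itself additively separable. Your write-up merely makes explicit the embedding that the paper leaves implicit.
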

\begin{proof}
\noindent \(\left(\Leftarrow\right)\) Theorem \ref{maintheorem} implies this direction.

\medskip

\noindent \(\left(\Rightarrow\right)\) The construction in the proof of Theorem \ref{maintheorem} is an additively-separable extended dynamic decision problem, so the same argument via contraposition works. \end{proof}

\subsection{(Non-Robust) Sequential Dominance}

As noted above, the sequential dominance of \cite{greenshtein1996comparison} (and \cite{de2018blackwell}) pertains to a scenario in which the information structures being compared are the agent's only source of information. Such a \textit{dynamic decision problem} \(\mathcal{\Tilde{D}} \coloneqq \left(u, A\right)\) consists of a collection of compact period-\(t\) action sets \(A \coloneqq \times_{t=1}^T A_t\) and a continuous utility function \(u \colon A \times \Theta \to \mathbb{R}\). Letting \(\pi_{\eta}\) denote the dynamic experiment corresponding to dynamic signal \(\eta\), and recalling that  \(\mathcal{A}(\pi)\) denotes the set of adapted maps given experiment \(\pi\), the value of dynamic signal \(\eta\) in \(\mathcal{\tilde{D}}\) is
\[\tilde{W}(\eta) = \max_{a(s) \in \mathcal{A}(\pi_{\eta})} \sum_{\substack{\theta \in \Theta\\ s \in S}} \pi_{\eta}\left(\left.s\right|\theta\right)\mu(\theta) u\left(a(s),\theta\right)\text{.}\]
\begin{definition}
    Dynamic signal \(\eta\) \emph{dominates} dynamic signal \(\hat{\eta}\) if  \(\tilde{W}\left(\eta\right) \geq \tilde{W}\left(\hat{\eta}\right)\) for any dynamic decision problem \(\mathcal{\tilde{D}}\). 
\end{definition} 
A corollary of Theorem \ref{maintheorem} is, therefore,
\begin{corollary}
    Dynamic signal \(\eta\) dominates dynamic signal \(\hat{\eta}\) if \(\eta\) dynamically reveal-or-refines \(\hat{\eta}\).
\end{corollary}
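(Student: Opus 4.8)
The plan is to observe that this corollary is almost immediate from Theorem \ref{maintheorem}, because ordinary (non-robust) dominance is nothing more than strong dominance restricted to a special subfamily of extended dynamic decision problems --- namely those whose auxiliary dynamic signal is uninformative. So the whole argument is: embed every dynamic decision problem $\mathcal{\tilde D}$ into an extended one, invoke the implication proved in Theorem \ref{maintheorem}, and read off the conclusion.

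Concretely, first I would introduce the \emph{null dynamic signal} $\rho^0 = (\rho^0_t)_{t=1}^T$ in which each $\rho^0_t$ is the trivial one-cell partition $\{\Theta\times[0,1]\}$. This is a legitimate signal (a finite partition into Lebesgue-measurable cells), and since the trivial partition is refined by every partition, $\rho^0_{t+1}$ trivially refines $\rho^0_t$, so $\rho^0$ is a legitimate dynamic signal. Next I would note that for any dynamic signal $\eta$ we have $\eta \vee \rho^0 = \eta$ cell-by-cell and period-by-period: $\eta_t \vee \rho^0_t$ is the coarsest common refinement of $\eta_t$ and the trivial partition, which is $\eta_t$ itself. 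The realization space $S_{\eta\vee\rho^0} = \times_{t=1}^T(S_{\eta,t}\times S_{\rho^0,t})$ differs from $S_\eta$ only by a product with singletons, so $\pi_{\eta\vee\rho^0}$ is just $\pi_\eta$ under the obvious identification, and the set of adapted maps $\mathcal{A}(\pi_{\eta\vee\rho^0})$ coincides with $\mathcal{A}(\pi_\eta)$. Hence, for the extended dynamic decision problem $\mathcal{D} = (u,A,\rho^0)$ obtained from $\mathcal{\tilde D} = (u,A)$, the value $W$ computed in $\mathcal{D}$ equals the value $\tilde W$ computed in $\mathcal{\tilde D}$, for both $\eta$ and $\hat\eta$.

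Finally I would close the argument. Suppose $\eta$ dynamically reveal-or-refines $\hat\eta$. By Theorem \ref{maintheorem}, $\eta$ strongly dominates $\hat\eta$, i.e. $W(\eta)\ge W(\hat\eta)$ in \emph{every} extended dynamic decision problem; in particular this holds for each problem of the form $\mathcal{D}=(u,A,\rho^0)$. By the previous paragraph this says $\tilde W(\eta)\ge \tilde W(\hat\eta)$ for every dynamic decision problem $\mathcal{\tilde D}=(u,A)$, which is exactly the statement that $\eta$ dominates $\hat\eta$. There is essentially no obstacle here; the only thing to be careful about --- and the one point I would spell out rather than wave at --- is the claim that joining with $\rho^0$ leaves the value and the adapted-map constraint untouched, i.e.\ that $\eta\vee\rho^0$ is genuinely equivalent to $\eta$ as an information source and not merely a relabeling that could interact with the adaptedness requirement; this is routine since the extra coordinates are singletons and the filtration generated is unchanged.
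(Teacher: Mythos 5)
Your proposal is correct and matches the paper's (implicit) argument: the paper presents this corollary as an immediate consequence of Theorem \ref{maintheorem}, precisely because a dynamic decision problem is an extended one with an uninformative auxiliary signal. Your explicit construction of the null dynamic signal $\rho^0$ and the verification that $\eta\vee\rho^0$ is informationally equivalent to $\eta$ is just the careful spelling-out of that observation.
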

The converse of this result is false, however. Indeed, that a static experiment can dominate (in the Blackwell sense) another despite its corresponding signal not dominating the other's is noted by \cite{gentzkow2017bayesian}.

\subsection{More, Sooner}

A counterintuitive phenomenon exposed by \cite{greenshtein1996comparison} is that a decision maker may strictly prefer to observe a less informative experiment before a more informative one. This is illustrated starkly in \citet[Example 2.1]{greenshtein1996comparison}, which we now adapt.

There are two states, \(\Theta=\{\theta_L,\theta_R\}\), and two periods. The prior is \(\mu(\theta_R)=4/5\). There are two conditionally-independent binary statistical experiments \(\pi_{\alpha} \colon \Theta \to \Delta\left(\left\{\ell_{\alpha},r_{\alpha}\right\}\right)\) and \(\pi_{\beta} \colon \Theta \to \Delta\left(\left\{\ell_{\beta},r_{\beta}\right\}\right)\), with
\[
\pi_{\alpha}\left(r_{\alpha}\mid \theta_R\right) = \frac{3}{4} = \pi_{\alpha}\left(\ell_{\alpha}\mid \theta_L\right),\qquad \text{and} \qquad \pi_{\beta}\left(r_{\beta}\mid \theta_R\right) = \frac{2}{3} = \pi_{\beta}\left(\ell_{\beta}\mid \theta_L\right),
\]
so that experiment \(\pi_{\alpha}\) strictly Blackwell dominates \(\pi_{\beta}\). Let \(\alpha\) and \(\beta\) denote a signal representation of this conditionally-independent pair, so that, within each state, the \(\alpha\)- and \(\beta\)-cells have the above marginal probabilities and their intersections have the product probabilities.\footnote{Concretely, for each \(\theta\), partition \(\{\theta\}\times[0,1]\) into four cells indexed by \((s_\alpha,s_\beta)\in\{\ell_\alpha,r_\alpha\}\times\{\ell_\beta,r_\beta\}\), with Lebesgue measures \(\pi_\alpha(s_\alpha\mid\theta)\pi_\beta(s_\beta\mid\theta)\). Let \(\alpha\) group these cells according to \(s_\alpha\), and let \(\beta\) group them according to \(s_\beta\).}

The decision problem faced by the DM is as follows. In period one, she has three actions \(\left\{D,L,R\right\}\) (defer, left, or right); and in the second period, she has two \(\left\{L,R\right\}\). Letting \(c>0\), her state-dependent payoff is, for any second-period action \(a\in\{L,R\}\),
\[
\begin{array}{c|cccc}
&\left(L,a\right)&\left(R,a\right)&\left(D,L\right)&\left(D,R\right)\\
\hline
\theta_L&1&0&1-c&-c\\
\theta_R&9/10&1&9/10-c&1-c
\end{array}
\]
This is a simple Wald-style sequential sampling problem \citep{wald1947sequential}: after the DM observes an experiment in period \(1\), she can either ``act then'' by choosing \(L\) or \(R\), or ``defer'' the decision to period \(2\), incurring a cost \(c > 0\) but enabling herself to take advantage of the further information she receives from the period-\(2\) experiment.\footnote{That is, in period \(1\) she either takes a decision or pays a cost to obtain more information (the period-\(2\)) experiment before making her decision.} We further specify that \(0<c<4/195\).

We compare two dynamic experiments: in the first, the DM observes \(\pi_{\alpha}\) in the first period, then \(\pi_{\beta}\) in the second (\((\pi_{\alpha}, \pi_{\beta})\)); whereas the second flips the order of observation (\((\pi_{\beta}, \pi_{\alpha})\)). As we now argue, in this particular decision problem, given her particular prior, the DM prefers \((\pi_{\beta}, \pi_{\alpha})\) to \((\pi_{\alpha}, \pi_{\beta})\), \textit{viz.}, she prefers to observe the less-informative experiment first. 

When the DM is choosing between \(L\) and \(R\), she prefers \(R\) if and only if her belief that the state is \(R\) is sufficiently high; specifically, if \(q \coloneqq \mathbb{P}(\theta_R) \geq 10/11\). Moreover, the posteriors after observing both experiments are \((2/5, 8/11, 6/7, 24/25)\), so it is only after observing both \(r_{\alpha}\) and \(r_{\beta}\) that the DM prefers \(R\) to \(L\). Consequently, no matter the sequence of experiments, if the DM observes the left realization (either \(\ell_{\alpha}\) or \(\ell_{\beta}\)) in the first period, she prefers to act then (and take action \(L\)) as the second-period experiment will have no value but impose on her the cost \(c\).

On the other hand, no matter the sequence of experiments, the DM prefers \(D\) if the first signal realization is right (either \(r_{\alpha}\) or \(r_{\beta}\)). In both cases, the second experiment now is strictly valuable, and this value exceeds its cost (provided \(c < 4/195\)).\footnote{If the dynamic experiment is \((\pi_{\beta}, \pi_{\alpha})\), the DM's posterior following \(r_{\beta}\) is \(8/9\). Given this, her probability of observing \(r_{\alpha}\) is 25/36. Thus, her payoff from choosing \(D\) then behaving optimally (\(R\) in period \(2\) if and only if \(r_{\alpha}\)) is, writing \(v(x) \coloneqq \max\left\{1 - x/10,x\right\}\), \[
\frac{25}{36}v\left(\frac{24}{25}\right)+\frac{11}{36}v\left(\frac{8}{11}\right) - c=\frac{19}{20} -c,
\] whereas her payoff from ``acting then" and choosing \(L\) is \(v(8/9)=41/45\). Consequently, she strictly prefers to ``defer'' (\(D\)) if and only if \(c < 7/180\). The same logic holds when the sequential experiment is \((\pi_{\alpha}, \pi_{\beta})\), in which case the cost threshold is \(4/195\).} 

Now let us finish comparing the two dynamic experiments. Observe that under both, the DM's ultimate choice between \(R\) and \(L\) is identical: she picks \(L\) unless she observes both \(r_{\alpha}\) and \(r_{\beta}\). Consequently, her payoffs, \textit{gross of costs}, are identical.\footnote{This payoff is \((4/5)\left((1/2)\cdot1+(1/2)\cdot(9/10)\right)+(1/5)\cdot(11/12)=283/300\).} The only difference between the two dynamic experiments is the probability of paying the waiting cost: when the DM sees \(\pi_{\alpha}\) first, this probability is \(\mathbb{P}_{\pi_{\alpha}}(r_{\alpha}) = 13/20\); which is greater than the cost probability under \(\pi_{\beta}\), \(\mathbb{P}_{\pi_{\beta}}(r_{\beta}) = 3/5\). As a result, the DM strictly prefers to observe \(\pi_{\beta}\), the strictly less-informative experiment, first.

Notably, in contrast, strong dominance does not produce this apparent pathology. The following corollary is an immediate consequence of Theorem \ref{maintheorem}.

\begin{corollary}\label{cor:more_sooner}
For dynamic signals \(\eta\coloneqq(\alpha,\alpha\vee\beta)\) and \(\eta'\coloneqq(\beta,\alpha\vee\beta)\), \(\eta\) strongly dominates \(\eta'\) if and only if \(\alpha\) strongly dominates \(\beta\).
\end{corollary}

The example in this section illustrates why the hypothesis in Corollary \ref{cor:more_sooner} is stronger than Blackwell dominance. Although the experiment corresponding to \(\alpha\) (\(\pi_\alpha\)) strictly Blackwell-dominates the experiment corresponding to \(\beta\) (\(\pi_\beta\)), \(\alpha\) does not strongly dominate \(\beta\). By Theorem \ref{maintheorem}, strong dominance is equivalent to reveal-or-refine. But in the example, no \(\alpha\)-cell either reveals the state or is contained in a \(\beta\)-cell, so \(\alpha\) does not reveal-or-refine \(\beta\).

The reason is precisely the conditional independence of the two experiments. For every \(s_\alpha\in\{\ell_\alpha,r_\alpha\}\) and every \(\theta\in\{\theta_L,\theta_R\}\),
\[
\mathbb{P}(\beta=r_\beta\mid \alpha=s_\alpha,\theta)
=
\mathbb{P}(\beta=r_\beta\mid \theta)\in(0,1).
\]
Thus, after any realization of \(\alpha\), both realizations of \(\beta\) remain possible in each state; an \(\alpha\)-cell cannot be contained in a \(\beta\)-cell unless \(\beta\) is conditionally degenerate. More generally, if \(\alpha\) reveal-or-refines \(\beta\) and an \(\alpha\)-cell does not reveal the state, then observing that \(\alpha\)-cell already determines the \(\beta\)-cell. If \(\alpha\) and \(\beta\) were also conditionally independent given the state, \(\beta\) would have to be conditionally degenerate on the relevant states. The experiments above are not conditionally degenerate, so the conditionally-independent structure that drives the reversal cannot be reveal-or-refine ordered.

This example highlights the contrast between Blackwell dominance and strong dominance. Blackwell dominance does not say that the actual realization of \(\beta\) is known after observing \(\alpha\). Reveal-or-refine imposes exactly this stronger history-by-history relation, except on histories at which the state has already been revealed. Consequently, under strong dominance the \(\alpha\)-first agent can mimic any strategy available to the \(\beta\)-first agent on non-state-revealing histories, including any act-now versus defer decision. The fresh residual information that drives the example is, therefore, absent.

\subsection{Interim Versus Ex Ante Comparisons}\label{subsec:interim-vs-exante}

\textcolor{NavyBlue}{BFK} establish that in the static setting, their main \textit{ex ante} notion is equivalent to an interim notion of strong dominance. We now show that the analogous equivalence persists in dynamic environments. Consider an extended dynamic decision problem $\mathcal D=(u,A,\rho)$. Let \(\eta\) be a dynamic signal. For $t\in\mathcal T$, $s^t\in\eta_t$, and $z^t\in\rho_t$ with $\mathbb P(s^t\cap z^t)>0$, we define the interim value of dynamic signal \(\eta\) to be
\[
W(\eta \mid s^t,z^t) \coloneqq
\max_{a(s)\in\mathcal A(\pi_{\eta \vee\rho})}
 \mathbb E\left[u \left(a(s),\theta\right)\,\middle|\, s^t\cap z^t\right],
\]
where the conditional expectation is taken with respect to the distribution induced by the prior $\mu$ and the dynamic experiment $\pi_{\eta\vee\rho}$. For any other dynamic signal \(\hat{\eta}\), define
\[
W(\hat \eta \mid s^t,z^t) \coloneqq
\max_{a(s)\in\mathcal A(\pi_{\hat \eta \vee\rho})}
 \mathbb E\left[u \left(a(s),\theta\right)\,\middle|\, s^t\cap z^t\right],
\]
where it is important to note that this is conditional on the same event \(s^t\cap z^t\) as \(W(\eta \mid s^t,z^t)\).

This mirrors \textcolor{NavyBlue}{BFK}'s interim comparison: we fix an event of the auxiliary information and the more refined signal and compare the continuation values of different signals conditional on that event.

\begin{proposition}\label{prop:interim-exante}
For dynamic signals $\eta$ and $\hat\eta$, the following are equivalent:
\begin{enumerate}
\item\label{interim1} $\eta$ strongly dominates $\hat\eta$.
\item\label{interim2} For any extended dynamic decision problem $\mathcal D=(u,A,\rho)$, any $t\in\mathcal T$, and any
$s^t\in\eta_t$, $z^t\in\rho_t$ with $\mathbb P(s^t\cap z^t)>0$, we have \(W(\eta\mid s^t,z^t) \geq W(\hat\eta\mid s^t,z^t)\).
\end{enumerate}
\end{proposition}

\begin{proof}
\noindent (\ref{interim2} $\Rightarrow$ \ref{interim1})
Fix an extended dynamic decision problem $\mathcal D=(u,A,\rho)$. Let $\hat a^*(s)\in\mathcal A(\pi_{\hat\eta\vee\rho})$ attain $W(\hat\eta)$ in $\mathcal D$. From \ref{interim2} with $t=1$, for every $s^1\in\eta_1$ and $z^1\in\rho_1$ with $\mathbb P(s^1\cap z^1)>0$,
\[
W(\eta\mid s^1,z^1) \geq
\mathbb E \left[u \left(\hat a^*(s),\theta\right)\,\middle|\, s^1\cap z^1\right].
\]

Now take expectations across the partition $\eta_1\vee\rho_1$.
By the usual ``gluing'' argument permitted by adaptedness, the \textit{ex ante} value is the average
of the conditional optima:
\[
W(\eta) = \sum_{c\in \eta_1\vee\rho_1}\mathbb P(c) W(\eta\mid c),
\quad\text{where} \quad W(\eta\mid c) \coloneqq \max_{a(s)\in\mathcal A(\pi_{\eta\vee\rho})}\mathbb E[u(a(s),\theta)\mid c].
\]
Indeed, for each $c$ pick an adapted optimizer $a^c(\cdot)$ attaining $W(\eta\mid c)$, then define
$a(\cdot)$ by setting $a(\cdot)=a^c(\cdot)$ on histories whose period-1 realization lies in $c$. 

Therefore,
\[W(\eta)
=\sum_{c\in \eta_1\vee\rho_1}\mathbb P(c) W(\eta\mid c)
\ge \sum_{c\in \eta_1\vee\rho_1}\mathbb P(c) \mathbb E \left[u \left(\hat a^*(s),\theta\right)\,\middle|\, c\right] =\mathbb E \left[u \left(\hat a^*(s),\theta\right)\right]
= W(\hat\eta).\]
Since $\mathcal D$ was arbitrary, $\eta$ strongly dominates $\hat\eta$.

\medskip \noindent (\ref{interim1} $\Rightarrow$ \ref{interim2})
Fix an extended dynamic decision problem $\mathcal D=(u,A,\rho)$. By Theorem~\ref{maintheorem}, $\eta$ dynamically reveal-or-refines $\hat\eta$.

Now fix $t\in\mathcal T$, $s^t\in\eta_t$, and $z^t\in\rho_t$ with $\mathbb P(s^t\cap z^t)>0$.
There are two cases. If $s^t$ reveals the state, then conditional on $s^t\cap z^t$, the state is known under $\eta$.
Consequently, nothing can be more valuable than $\eta$ conditional on $s^t\cap z^t$,\footnote{Because \(s^t\) reveals the state, \(s^t \cap z^t\) pins down the realized state \(\theta^*\), so any constant adapted map \(a(s) = a^* \in \argmax_{a \in A}u(a,\theta^*)\) delivers the DM's maximal payoff conditional on \(s^t \cap z^t\). This is also true under \(\hat{\eta}\), so the stated inequality is an equality; \textit{viz.}, $W(\eta\mid s^t,z^t) = W(\hat\eta\mid s^t,z^t)$.} so $W(\eta\mid s^t,z^t)\ge W(\hat\eta\mid s^t,z^t)$.

Now suppose $s^t$ does not reveal the state. Since $\eta$ dynamically reveal-or-refines $\hat\eta$, for each $\tau\le t$ the realized element
$s^\tau\in\eta_\tau$ satisfies $s^\tau\subseteq \hat s^\tau$
for some $\hat s^\tau\in\hat\eta_\tau$.
Because $\hat\eta_\tau$ is a partition, this $\hat s^\tau$ is unique whenever it exists.
Accordingly, for any continuation of the auxiliary signal $(z^\tau)_{\tau\le t}$ consistent with $z^t$,
we have
\[
s^\tau\cap z^\tau  \subseteq \hat s^\tau\cap z^\tau,
\quad\text{for all }\tau\le t.
\]

Now let $\hat a(\cdot)\in\mathcal A(\pi_{\hat\eta\vee\rho})$ be arbitrary.
We construct $a(\cdot)\in\mathcal A(\pi_{\eta\vee\rho})$ by mimicking $\hat a(\cdot)$ history-by-history:
for each realization of $(\eta\vee\rho)$ and each $\tau\in\mathcal T$, let $\hat s^\tau$ be the (unique)
element of $\hat\eta_\tau$ containing the realized $s^\tau$ whenever $s^\tau$ does not reveal the state,
and define the $\tau$-th component of $a(\cdot)$ to equal the $\tau$-th component of $\hat a(\cdot)$ evaluated
at the corresponding $(\hat s^\tau,z^\tau)$ history. If at some later date $\tau>t$ the realized element $s^\tau$ under $\eta$ reveals the state $\theta$, then from that date onward we may (re)define the remaining components $\{a_k\}_{k\ge \tau}$ so as to maximize the continuation payoff given $\theta$ and the already-chosen past actions $\{a_k\}_{k<\tau}$.\footnote{Since on any element \(s^\tau\) that reveals \(\theta\), a continuation plan that is constant across future histories (but may depend on the revealed \(\theta\) and the already-chosen past actions) is adapted, this modification is feasible and can only weakly increase \(\mathbb E[u(a(s),\theta)\mid s^t\cap z^t]\) relative to continued mimicking.}
 This is well-defined and adapted because, for each $\tau$,
the projection of $a(\cdot)$ onto $A_\tau$ depends only on the period-$\tau$ history.

On the event $s^t\cap z^t$, let \(\hat{s} \in S_{\hat{\eta} \vee \rho}\) denote the realized element under \(\pi_{\hat{\eta} \vee \rho}\) induced by the same realized point in \(\Theta \times [0,1]\). Moreover, the constructed $a(\cdot)$ and the original $\hat a(\cdot)$ induce the same
action history on all continuations before any possible revelation of the state. Then,
\[
\mathbb E \left[u \left(a(s),\theta\right)\,\middle|\, s^t\cap z^t\right]
\ \ge\
\mathbb E \left[u \left(\hat a(\hat s),\theta\right)\,\middle|\, s^t\cap z^t\right].
\]
Since $\hat a(\cdot)$ was arbitrary, taking the maximum over $\hat a(\cdot)$ on the right and then the maximum
over $a(\cdot)$ on the left yields $W(\eta\mid s^t,z^t)\ge W(\hat\eta\mid s^t,z^t)$.\end{proof}


\newpage

\section*{Declarations}

The authors did not receive support from any organization for the submitted work. The authors have no relevant financial or non-financial interests to disclose.

\newpage

\bibliography{sample.bib}

\end{document}